\newcounter{global}
\theoremstyle{definition}
\newtheorem{definition}[global]{Definition}
\theoremstyle{plain}
\newtheorem{theorem}[global]{Theorem}
\newtheorem{lemma}[global]{Lemma}
\newtheoremstyle{note}{}{}{}{}{\itshape}{.}{.5em}{}
\theoremstyle{note}
\newtheorem{remark}{Remark}%
\newtheorem{example}{Example}%
\renewcommand\section{%
  \@startsection {section}{1}{\z@}%
  {-3.5ex \@plus -1ex \@minus -.2ex}%
  {2.3ex \@plus.2ex}%
  {\normalfont\large\bfseries}}
\DeclareFontFamily{U}{eurmo}{}
\DeclareFontShape{U}{eurmo}{m}{n}{<-6>eurmo10<6-8>eurmo10<8->eurmo10}{}
\DeclareMathAlphabet{\eurmo}{U}{eurmo}{m}{n}
\def\itm#1{{\rm(\textit{\romannumeral#1})}}
\def\e#1{\eurmo{#1}}
\def\ineq{\ensuremath{\preccurlyeq}}
\def\tu#1{\left<#1\right>}
\def\br#1{\left[#1\right]}
\newcommand{\st}[1][M]{{\ensuremath{\mathbf{#1}}}}
\newcommand{\alg}[1][M]{\ensuremath{\tu{#1, F^{\st[#1]}}}}
\newcommand{\loalg}[1][M]{\ensuremath{\tu{#1, \ineq^{\st[#1]}, F^{\st[#1]}}}}
\let\plL=\L
\renewcommand{\L}{{\st[L]}}
\newcommand{\TX}[1][X]{{{\st[T]_{\!F}(#1)}}}
\newcommand{\qc}[1][\xi]{\ensuremath{#1}}
\newcommand{\cl}[2][\xi]{{\ensuremath{\br{#2}_{#1}}}}
\newcommand{\val}[2][\st,v]{%
  \ensuremath{\left\lVert%
      \bgroup #2\egroup%
    \right\rVert_{#1}}}
\newcommand{\prv}[2][\Sigma]{%
  \ensuremath{\left\lvert%
      \bgroup #2\egroup%
    \right\rvert_{#1}}}
\newcommand{\home}[1]{\ensuremath{{#1^{\scriptscriptstyle\sharp}}}}
\newcommand{\Mod}[1][\Sigma]{{\mathop{\mathrm{Mod\!}}{(#1)}}}
\let\oldphi=\phi
\let\phi=\varphi
\let\varphi=\oldphi
\newcommand{\Fml}{\mathit{Fml}}
\newcommand{\semcl}[1][\Sigma]{#1^{\vDash}}
\newcommand{\syncl}[1][\Sigma]{#1^{\vdash}}
\newcommand{\cdotsym}{\,\cdot\,}
\def\logand{\mathop{\binampersand}}
\begin{document}

\title{Fuzzy inequational logic}

\date{\normalsize%
  Dept. Computer Science, Palacky University, Olomouc}

\author{Vilem Vychodil\footnote{%
    e-mail: \texttt{vychodil@binghamton.edu},
    phone: +420 585 634 705,
    fax: +420 585 411 643}}

\maketitle

\begin{abstract}
  We present a logic for reasoning about graded inequalities which
  generalizes the ordinary inequational logic used in universal algebra.
  The logic deals with atomic predicate formulas of the form of
  inequalities between terms and formalizes their semantic entailment
  and provability in graded setting which allows to draw partially true
  conclusions from partially true assumptions. We follow the Pavelka
  approach and define general degrees of semantic entailment and provability
  using complete residuated lattices as structures of truth degrees. We
  prove the logic is Pavelka-style complete. Furthermore, we present
  a logic for reasoning about graded if-then rules which is obtained
  as particular case of the general result.
\end{abstract}

\section{Introduction}\label{sec:intro}
In this paper, we introduce a general logic for approximate reasoning about
atomic predicate formulas which take form of inequalities between terms. Such
formulas, called inequalities, are essential in the classic theory of varieties
of ordered algebras~\cite{Bloom} since the varieties are
exactly the classes of ordered algebras which are definable by sets of
inequalities. We extend the classic logic for reasoning about inequalities
by considering \emph{degrees} to which one considers the inequalities valid.
We would like to stress that our approach is truth-functional and
the degrees we use are interpreted as the \emph{degrees of truth} and they
should not be confused or interchanged with degrees that appear in other
formalisms and uncertainty theories (like the degrees of \emph{belief}).
We assume that the degrees come from general structures of truth degrees.
In particular, we use complete residuated
lattices~\cite{Bel:FRS,GaJiKoOn:RL,Gog:Lic}.

In the proposed logic, we introduce two types of entailment: First,
a \emph{semantic entailment} which is based on evaluating inequalities in
particular fuzzy structures called algebras with fuzzy orders~\cite{Vy:Awfo}.
Using algebras with fuzzy orders as models, we are able to introduce degrees
to which inequalities semantically follow from collections of partially valid
inequalities. Second, we introduce a graded notion of \emph{provability}
(\emph{syntactic entailment}) which allows us to infer partially valid
conclusions from collections of partially valid inequalities. The notion of
graded provability is defined using a specific deductive system which consists
of axioms and three deduction rules. We prove that our logic is complete
in that the degrees of semantic entailment coincide with the degrees
of provability. This type of graded completeness is called Pavelka-style
completeness~\cite{Haj:MFL} after J.~Pavelka who, inspired by the
influential paper by J.\,A.~Goguen~\cite{Gog:Lic}, presented the
general concept in~\cite{Pav:Ofl1,Pav:Ofl2,Pav:Ofl3} and studied
Pavelka-style complete propositional logics. A thorough and general
treatment of logics with this style of completeness is presented
in~\cite{Ger:FL}.

We consider the completeness result to be the main result of this paper.
In addition to that, we present an application of the result showing
a complete axiomatization of a logic for reasoning about graded if-then
rules called attribute implications. Such rules, sometimes used under different
names, are formulas which play important roles in several disciplines concerned
with data analysis and management such as the formal concept analysis~\cite{GaWi:FCA}
and relational databases~\cite{Mai:TRD}. We show in the paper that the rules
can be treated as particular inequalities and that using our general result
we may obtain a complete logic for approximate reasoning with such inequalities.
By making this observation, we contribute to the area of reasoning with graded
if-then rules and present an alternative to the approaches
in~\cite{BeVy:ADfDwG,Po:FB} which may further be explored.

Previous results which are related to our paper include the fuzzy
equational logic~\cite{Be:Fel} which introduced Pavelka-style logic for
reasoning about graded equalities and fuzzy Horn logic dealing with
implications between graded equalities~\cite{BeVy:FHLI}. Our logic can be
seen as a generalization of the fuzzy equational logic.
Indeed, from the syntactic point of view, it is a logic which results from
the fuzzy equational logic by omitting the deduction rule of symmetry.
From the semantic point of view, the present logic uses more general
models---algebras with fuzzy orders~\cite{Vy:Awfo} instead of algebras
with fuzzy equalities~\cite{BeVy:Awfs}. A survey of results on fuzzy
equational logic can be found in~\cite{BeVy:FEL}.

This paper is organized as follows. In Section~\ref{sec:prelim},
we present preliminaries from residuated structures of truth degrees and
algebras with fuzzy orders. In Section~\ref{sec:synsem}, we introduce our
logic and present the central notions of semantic and syntactic entailments.
In Section~\ref{sec:compl}, we show that our logic is syntactico-semantically
complete in Pavelka style. In Section~\ref{sec:fal}, we present an application
of the general completeness result provided in Section~\ref{sec:compl} by
showing a general logic of attribute implications with
a complete Pavelka-style axiomatization.

\section{Preliminaries}\label{sec:prelim}
In this section, we present basic notions of complete residuated lattices
which appear in the fuzzy inequational logic as the structures of truth degrees.
Moreover, we present algebras with fuzzy orders which are used
as the basic semantic structures in the fuzzy inequational logic.

\subsection{Complete Residuated Lattices}
A complete (integral commutative)
residuated lattice \cite{Bel:FRS,GaJiKoOn:RL}
is an algebra $\L=\langle L,\wedge,\vee,\otimes,\rightarrow,0,1\rangle$
where
$\langle L,\wedge,\vee,0,1 \rangle$ is a complete lattice,
$\langle L,\otimes,1 \rangle$ is a commutative monoid, and
$\otimes$ and $\rightarrow$ satisfy the adjointness property:
$a \otimes b \leq c$ if{}f $a \leq b \rightarrow c$
($a,b,c \in L$). Examples of complete residuated
lattices include structures on the real unit interval given
by left-continuous t-norms~\cite{EsGo:MTL,Haj:MFL,KMP:TN}
as well as finite structures of degrees.

Given $\L$ and $M \ne \emptyset$, an $\L$-set $A$ in $M$
(or a fuzzy set in $M$ using degrees in $L$) is a map $A\!: M \to L$.
For $\e{a} \in M$, the degree $A(\e{a}) \in L$ is interpreted as the
degree to which $\e{a}$ belongs to $A$.
Analogously, a binary $\L$-relation $R$
on $M$ is a map $R\!: M \times M \to L$. For $\e{a},\e{b} \in M$,
the degree $R(\e{a},\e{b}) \in L$ is interpreted as the degree to
which $\e{a}$ and $\e{b}$ are $R$-related. Thus, a binary $\L$-relation
on $M$ may be seen as an $\L$-set in $M \times M$.
If a symbol like $\ineq$ denotes a binary $\L$-relation,
we use the usual infix notation and write $\e{a} \ineq \e{b}$
instead of ${\ineq}(\e{a},\e{b})$.

For $\L$-sets $A_1$ and $A_2$ in $M$, we put $A_1 \subseteq A_2$
whenever $A_1(\e{a}) \leq A_2(\e{a})$ for all $\e{a} \in M$
and say that $A_1$ is (fully) contained in $A_2$.
Operations with $\L$-sets are defined componentwise using
operations in $\L$. For instance, if $A_1$ and $A_2$ are $\L$-sets in $M$,
then $A_1 \cap A_2$ and $A_1 \cup A_2$ denote $\L$-sets in $M$ such that
$(A_1 \cap A_2)(\e{a}) = A_1(\e{a}) \wedge A_2(\e{a})$ for each $\e{a} \in M$
and 
$(A_1 \cup A_2)(\e{a}) = A_1(\e{a}) \vee A_2(\e{a})$ for each $\e{a} \in M$,
respectively. Note that $\cap$ and $\cup$ may be used for arbitrary arguments.
That is, for $\mathcal{A} = \{A_i;\, i \in I\}$ where all $A_i$ ($i \in I$)
are $\L$-sets in $M$, we consider an $\L$-set $\bigcap\mathcal{A}$ in $M$
which may also be denoted by $\bigcap_{i \in I}A_i$ so that
\begin{align*}
  \textstyle
  (\bigcap\mathcal{A})(\e{a}) =
  \bigl(\bigcap_{i \in I} A_i\bigr)(\e{a}) =
  \bigwedge_{i \in I}A_i(\e{a})
\end{align*}
for each $\e{a} \in M$. Analogously for $\bigcup$ and $\bigvee$.

\subsection{Algebras with Fuzzy Order}\label{ssec:awfo}
The inequalities we consider as formulas are interpreted in structures
called alegbras with fuzzy order. These structures represent graded
generalizations of the classic ordered algebras. In this section,
we recall algebras with fuzzy order and present their basic properties
which are needed to establish the completeness theorem. Details on algebraic
properties of the structures can be found in~\cite{Vy:Awfo}.

Recall that a \emph{type} of algebras is given by a set $F$ of function symbols
$f \in F$ together with their arities. We assume that the arity of
each $f \in F$ is finite. An algebra (of type $F$, see~\cite{BuSa:CUA}) is
a structure $\st = \alg$ where $M$ is
a non-empty universe set and $F^{\st}$ is a set of functions
interpreting the function symbols in~$F$. That is, for each $n$-ary
$f \in F$ there is $f^{\st} \in F^{\st}$ which is a function
$f^{\st}\!: M^n \to M$.

Let $\L$ be a complete residuated lattice.
An \emph{algebra with fuzzy order}~\cite[Definition 1]{Vy:Awfo}
(of type $F$) considering $\L$ as the structure of degrees
(shortly, an \emph{algebra with $\L$-order}) is a structure
$\st = \loalg$ such that $\alg$ is an algebra (of type $F$) and
$\ineq^{\st}$ is a binary $\L$-relation on $M$ satisfying the
following conditions:
\begin{align}
  \e{a} \ineq^{\st} \e{b} = 
  \e{b} \ineq^{\st} \e{a} = 1
  &\text{ if{}f }
  \e{a} = \e{b},
  \label{eqn:iRef} \\
  \e{a} \ineq^{\st} \e{b} \otimes \e{b} \ineq^{\st} \e{c}
  &\leq \e{a} \ineq^{\st} \e{c}, \label{eqn:iTra} \\
  \e{a}_1 \ineq^{\st} \e{b}_1 \otimes \cdots \otimes
  \e{a}_n \ineq^{\st} \e{b}_n
  &\leq f^{\st}(\e{a}_1,\ldots,\e{a}_n) \ineq^{\st}
  f^{\st}(\e{b}_1,\ldots,\e{b}_n),
  \label{eqn:iComF}
\end{align}
for all $\e{a},\e{b},\e{c},\e{a}_1,\e{b}_1,\ldots,\e{a}_n,\e{b}_n \in M$
and any $n$-ary $f \in F$.

\begin{remark}
  (a)
  Algebras with $\L$-order are generalizations of the ordinary ordered
  algebras in the following sense: If $\L$ is the two-element Boolean algebra,
  then~\eqref{eqn:iRef} yields that $\ineq^{\st}$ is a reflexive and
  antisymmetric binary relation on $M$. Moreover, \eqref{eqn:iTra} yields
  that $\ineq^{\st}$ is transitive and \eqref{eqn:iComF} is the compatibility
  condition, saying that a function in $\st$ is compatible with $\ineq^{\st}$.
  Thus, setting $\L$ to the two-element Boolean algebra,
  algebras with $\L$-orders become the ordinary ordered algebras.

  (b)
  Note that both \eqref{eqn:iTra} and \eqref{eqn:iComF} involve $\otimes$,
  i.e., the conditions of transitivity and compatibility of $\ineq^{\st}$ with
  the functions in $\st$ are formulated in terms of the
  multiplication $\otimes$ in $\L$. Condition~\eqref{eqn:iRef}
  ensures that the symmetric interior of $\ineq^{\st}$ is
  a compatible fuzzy equality relation, see~\cite[Theorem 3]{Vy:Awfo}.

  (c)
  For readers familiar with fuzzy order relations: $\ineq^{\st}$ is
  an $\L$-order in sense of~\cite[Section 4.3.1]{Bel:FRS}, i.e., it is
  $\wedge$-antisymmetric with respect to a fuzzy equality relation which
  in our case coincides with the symmetric interior of $\ineq^{\st}$.
  There are other definitions of fuzzy orders which we do not consider
  in this paper, e.g., fuzzy orders which are $\otimes$-antisymmetric
  with respect to a given similarity relation, cf.~\cite{BoDeFo:Cfwo}.
  A modestly interesting open problem is whether the subsequent results
  can be established for such alternative fuzzy orders.
\end{remark}

In our considerations on algebras with fuzzy orders, we utilize homomorphisms
and factor algebras with fuzzy orders~\cite{Vy:Awfo}. The notions are introduced
as follows. Let $\st$ and $\st[N]$ be algebras with $\L$-orders
(of the same type $F$). A map $h\!: M \to N$ which satisfies equality 
\begin{align}
  h\bigl(f^{\st}(\e{a}_1,\ldots,\e{a}_n)\bigr) &=
  f^{\st[N]}\bigl(h(\e{a}_1),\ldots,h(\e{a}_n)\bigr)
  \label{eqn:morff}
\end{align}
for any $n$-ary $f \in F$ and all $\e{a}_1,\ldots,\e{a}_n \in M$; and 
\begin{align}
  \e{a} \ineq^{\st} \e{b} &\leq h(\e{a}) \ineq^{\st[N]} h(\e{b})
  \label{eqn:morf}
\end{align}
for all $\e{a},\e{b} \in M$
is called a \emph{homomorphism}~\cite[Section 5]{Vy:Awfo} and
is denoted by $h\!: \st \to \st[N]$.
Therefore, homomorphisms are maps which are compatible with the functional
parts of $\st$ and $\st[N]$ and the $\L$-orders of $\st$ and $\st[N]$.
If $h\!: \st \to \st[N]$ is surjective, then $\st[N]$ is called
a (\emph{homomorphic}) \emph{image} of $\st$.

Consider an algebra $\st$ with $\L$-order.
A binary $\L$-relation $\qc$ on $M$ is called
an \emph{$\L$-preorder compatible with $\st$}~\cite[Section 5]{Vy:Awfo}
whenever it satisfies
\begin{align}
  \ineq^{\st} &\subseteq \qc,
  \label{eqn:qcRef} \\
  \qc(\e{a},\e{b}) \otimes \qc(\e{b},\e{c}) &\leq \qc(\e{a},\e{c}),
  \label{eqn:qcTra} \\
  \qc(\e{a}_1,\e{b}_1) \otimes \cdots \otimes
  \qc(\e{a}_n,\e{b}_n)
  &\leq
  \qc\bigl(f^{\st}(\e{a}_1,\ldots,\e{a}_n),
  f^{\st}(\e{b}_1,\ldots,\e{b}_n)\bigr),
  \label{eqn:qcCom}
\end{align}
for all $\e{a},\e{b},\e{c},\e{a}_1,\e{b}_1,\ldots,\e{a}_n,\e{b}_n \in M$
and any $n$-ary $f \in F$. Given $\st$ and an $\L$-preorder $\qc$
compatible with $\st$, we put
\begin{itemize}
\item
  $M/\qc = \bigl\{\cl[\qc]{\e{a}};\, \e{a} \in M\bigr\}$ where
  $\cl[\qc]{\e{a}} = \{\e{b} \in M;\, \qc(\e{a},\e{b}) = \qc(\e{b},\e{a}) = 1\}$;
\item
  $f^{\st/\qc}\bigl(\cl[\qc]{\e{a}_1},\ldots,\cl[\qc]{\e{a}_n}\bigr) =
  \cl[\qc]{f^{\st}(\e{a}_1,\ldots,\e{a}_n)}$;
\item
  $\cl[\qc]{\e{a}} \ineq^{\st/\qc} \cl[\qc]{\e{b}} = \qc(\e{a},\e{b})$;
\end{itemize}
and call $\st/\qc = \langle M/\qc,\ineq^{\st/\qc},F^{\st/\qc}\rangle$ the
\emph{factor algebra with $\L$-order}~\cite[Section 5]{Vy:Awfo} of $\st$ modulo $\qc$.
One can show that factor algebras with $\L$-orders are well defined algebras
with $\L$-orders, see~\cite[Lemma 4]{Vy:Awfo} for details.

The notions of homomorphic images and factor algebras preserve the desirable
properties of their classic counterparts~\cite{BuSa:CUA}. Namely, isomorphic
copies of factor algebras can be seen as representations of homomorphic images.
Indeed, for an $\L$-preorder $\qc$ which is compatible with $\st$, we may
introduce a surjective map $h_{\qc}\!: M \to M/\qc$ by putting
\begin{align}
  h_{\qc}(\e{a}) &= \cl{\e{a}}.
  \label{eqn:nat}
\end{align}
The map is called a \emph{natural homomorphism}~\cite[Section 5]{Vy:Awfo}
induced by $\qc$.
Conversely, for a surjective homomorphism $h\!: \st \to \st[N]$,
we may introduce a binary $\L$-relation $\qc_h$ on $M$ by putting
\begin{align}
  \qc_h(\e{a},\e{b}) &= h(\e{a}) \ineq^{\st[N]} h(\e{b}),
  \label{eqn:hom_th}
\end{align}
which is a compatible $\L$-preorder on $\st$ and 
$\st/\qc_h$ is isomorphic to $\st[N]$ in terms of the isomorphism
of general $\L$-structures, see~\cite{Bel:FRS,Vy:Awfo} for details.

\section{Syntax and Semantics of Fuzzy Inequational Logic}\label{sec:synsem}
This section introduces the basic notions of fuzzy inequational logic which
is developed in Pavelka style. In Subsection~\ref{ssec:sement}, we introduce
formulas, their interpretation in algebras with fuzzy orders, and present
some observations which are consequences of the general Pavelka framework.
In Subsection~\ref{ssec:synent}, we introduce a deductive system.

\subsection{Formulas, Models, and Semantic Entailment}\label{ssec:sement}
We consider formulas as syntactic expressions written in a particular language.
Namely, a \emph{language} is defined by a type $F$ of algebras (i.e., by the
collection of \emph{function symbols} with their arities,
cf. Subsection~\ref{ssec:awfo}) and a set $X$ of \emph{object variables.}
The object variables play the same role as
in predicate logics. At this point, we make no assumption on $X$. Furthermore,
the language contains the symbol $\ineq$ which is the only
\emph{relation symbol} in the language
and auxiliary symbols like parentheses and commas.

We consider the usual notion of a \emph{term}: Given $F$ and $X$, each variable
$x \in X$ is a term and if $t_1,\ldots,t_n$ are terms and $f \in F$ is
an $n$-ary function symbol, then $f(t_1,\ldots,t_n)$ is a term.
The set of all terms is then denoted $T_F(X)$ or simply $T(X)$ if $F$
is clear from the context.

A \emph{formula} (in the language given by $F$ and $X$) is any expression
\begin{align}
  t \ineq t'
  \label{eqn:ineq}
\end{align}
where $t,t' \in T_F(X)$ and it is called an \emph{inequality.}

Thus, the notion of inequality is the same as in the case of the classic
ordered algebras. For convenience, we may identify formulas with pairs
of terms in $T_F(X)$ and thus the Cartesian product $T_F(X) \times T_F(X)$
represents the set of all formulas in question. Indeed, each \eqref{eqn:ineq}
may be understood as
\begin{align}
  \langle t,t'\rangle \in T_F(X) \times T_F(X)
  \label{eqn:pair}
\end{align}
and \emph{vice versa.} Note that considering formulas as pairs of terms
like~\eqref{eqn:pair} is consistent with the abstract Pavelka approach where
formulas are supposed to be abstract objects coming from a predefined set
of all formulas which is in our case $T_F(X) \times T_F(X)$.
Therefore, we put
\begin{align}
  \Fml &= T_F(X) \times T_F(X)
\end{align}
and call $\Fml$ the \emph{set of all formulas.}

\begin{remark}
  (a)
  Let us note that in order to be able to consider any formulas, $T_F(X)$
  must be non-empty. Note that $T_F(X) \ne \emptyset$ whenever $X$
  is non-empty or $F$ contains nullary function symbols, i.e.,
  symbols for \emph{object constants.}

  (b)
  Analogously as for the classic algebras, for any complete residuated
  lattice $\L$, we may consider a term algebra with
  $\L$-order~\cite[Example 2]{Vy:Awfo}.
  Namely, if $T_F(X) \ne \emptyset$, we denote by $\TX$ the algebra
  $\langle T_F(X),\ineq^{\TX},F^{\TX}\rangle$ with $\L$-order 
  where $\ineq^{\st}$ is the identity, i.e.,
  \begin{align*}
    t \ineq^{\TX} t' &=
    \left\{
      \begin{array}{@{\,}l@{\quad}l@{}}
      1, & \text{if } t = t', \\
      0, & \text{otherwise,}
    \end{array}
    \right.
  \end{align*}
  for all $t,t' \in T_F(X)$. Furthermore, each $f^{\TX}$ is defined by
  \begin{align*}
    f^{\TX}(t_1,\ldots,t_n) &= f(t_1,\ldots,t_n).
  \end{align*}
  Thus, $\TX$ results from an ordinary
  term algebra by adding $\ineq^{\TX}$.
  $\mathbf{L}$-relations on $\TX$. We call $\TX$
  the (\emph{absolutely free}) \emph{term algebra with $\L$-order}
  over variables in $X$.
\end{remark}

We now introduce the abstract semantics for our formulas. Recall that
in the abstract Pavelka setting~\cite{Haj:MFL}
an \emph{$\L$-semantics for $\Fml$}
is a set $\mathcal{S}$ of $\L$-sets in $\Fml$. Thus, each $E \in \mathcal{S}$
is a map $E\!: \Fml \to L$ which defines for each $\phi \in \Fml$
a degree $E(\phi) \in L$ called the degree to which $\phi$ is true in $E$.
In case of our logic, we introduce $\mathcal{S}$ by evaluating inequalities
in algebras with fuzzy orders. The details are summarized below.

Let $F$, $X$, and $\L$ be fixed.
For an algebra $\st$ with $\L$-order of type $F$,
any map $v\!: X \to M$ is called
an $\st$-valuation of variables in $X$, i.e., the result $v(x)$ is the
value of $x$ in $\st$ under $v$. As usual, for each term $t \in T_F(X)$,
we define the value $\val{t}$ of $t$ in $\st$ under $v$ as follows:
\begin{align}
  \val{t} &=
  \left\{
    \begin{array}{@{\,}l@{\quad}l@{}}
      v(x), & \text{if } t \text{ is } x \in X, \\
      f^{\st}\bigl(\val{t_1},\ldots,\val{t_n}\bigr),
      & \text{if } t \text{ is } f(t_1,\ldots,t_n).
    \end{array}
  \right.
  \label{eqn:val_t}
\end{align}
Note that the usual algebraic view of~\eqref{eqn:val_t} is that the values
of terms in $\st$ under $v$ are values of homomorphisms from $\TX$ to $\st$.
Indeed, as in the classic setting, an $\st$-valuation $v\!: X \to M$
admits a unique \emph{homomorphic extension} $\home{v}\!: \TX \to \st$
for which
\begin{align}
  \home{v}(t) = \val{t}.
  \label{eqn:home_val}
\end{align}
for all $t \in T_F(X)$.
Now, for any inequality $t \ineq t'$, we may introduce
the \emph{degree to which $t \ineq t'$ is true in $\st$ under $v$} by
\begin{align}
  \val{t \ineq t'} &= \val{t} \ineq^{\st} \val{t'}.
  \label{eqn:valMv}
\end{align}
Observe that utilizing \eqref{eqn:hom_th} and \eqref{eqn:home_val},
we rewrite \eqref{eqn:valMv} as
\begin{align}
  \val{t \ineq t'} &=
  \home{v}(t) \ineq^{\st} \home{v}(t') =
  \qc_{\home{v}}(t,t'),
  \label{eqn:valMv_qc}
\end{align}
where $\qc_{\home{v}}$ is the compatible $\L$-preorder on $\TX$ induced
by the homomorphic extension $\home{v}$ of $v$. By considering the infimum
of all degrees~\eqref{eqn:valMv} ranging over all possible $\st$-valuations,
we define
\begin{align}
  \val[\st]{t \ineq t'} &= \textstyle\bigwedge_{v: X \to M}\val{t \ineq t'}
  \label{eqn:valM}
\end{align}
which is called the \emph{degree to which $t \ineq t'$ is true in $\st$}
(under all $\st$-valuations). Since we assume that $\L$ is a complete lattice,
\eqref{eqn:valM} is always defined. Utilizing~\eqref{eqn:valMv_qc}, we may
rewrite \eqref{eqn:valM} as
\begin{align}
  \val[\st]{t \ineq t'} &= 
  \textstyle\bigwedge_{v: X \to M}\qc_{\home{v}}(t,t') = 
  \bigl(\textstyle\bigcap_{v: X \to M}\qc_{\home{v}}\bigr)(t,t').
  \label{eqn:valM_qc}
\end{align}
Thus, taking into account the fact that the set of all compatible
$\L$-preorders on any algebra with $\L$-order is closed under arbitrary
intersections~\cite{Vy:Awfo}, we may consider a compatible
$\L$-preorder $\qc_{\st}$ which is defined as the intersection
of $\qc_{\home{v}}$ for all possible $\st$-valuations. That is,
\begin{align}
  \qc_{\st} &= \textstyle\bigcap_{v: X \to M}\qc_{\home{v}}.
  \label{eqn:qc_st_bigcap}
\end{align}
Under this notation, we have
\begin{align}
  \val[\st]{t \ineq t'} &= \qc_{\st}(t,t').
  \label{eqn:valst_qcst}
\end{align}
Therefore, $\qc_{\st}$ can be seen as an algebraic representation of
the degrees to which formulas are true in a given algebra $\st$ with
$\L$-order. Using this concept, we introduce the abstract semantics for
our logic in Pavelka style as follows:
\begin{align}
  \mathcal{S} &=
  \bigl\{\qc_{\st};\, \st \text{ is algebra with $\L$-order of type } F\bigr\}.
\end{align}

Now, having defined the formulas and their $\L$-semantics, the abstract
Pavelka framework gives us the notions of models and semantic entailment:
Let $\Sigma\!: \Fml \to L$, i.e., $\Sigma$ is an $\L$-set in $\Fml$
and let $\qc_{\st} \in \mathcal{S}$. Under this notation, $\qc_{\st}$
is called an \emph{$\mathcal{S}$-model of $\Sigma$} (shortly, a model)
whenever $\Sigma \subseteq \qc_{\st}$.
The set of all models of $\Sigma$ is denoted by $\Mod$. That is,
using \eqref{eqn:valst_qcst}, we have
\begin{align}
  \Mod &= \bigl\{\qc_{\st};\, \Sigma(t, t') \leq \val[\st]{t \ineq t'}
  \text{ for all } t,t' \in T_F(X)\bigr\}.
  \label{eqn:Mod}
\end{align}
Notice that $\Mod$ is indeed a set (not a proper class) which is
a subset of $\mathcal{S}$. Moreover,
the \emph{degree to which $t \ineq t'$ semantically follows by $\Sigma$}
is defined by
\begin{align}
  \val[\Sigma]{t \ineq t'} &=
  \bigl(\textstyle\bigcap\Mod\bigr)(t,t')
  \label{eqn:sement}
\end{align}
which by~\eqref{eqn:valst_qcst} and \eqref{eqn:Mod}
can be rewritten as
\begin{align}
  \val[\Sigma]{t \ineq t'} &=
  \textstyle\bigwedge\bigl\{\val[\st]{t \ineq t'}\!;\,
  \qc_{\st} \text{ is a model of } \Sigma\bigr\},
\end{align}
i.e., $\val[\Sigma]{t \ineq t'}$ is the infimum of degrees to which 
$t \ineq t'$ is true in all models of~$\Sigma$ which is the usual way
of defining degrees of semantic entailment in truth-functional logics
using (subclasses of) residuated lattices as the structures of truth
degrees.

\begin{remark}
  Note that the mainstream approach in fuzzy logics in 
  the narrow sense~\cite{EsGo:MTL,Got:Mfl,Haj:MFL}
  considers theories, i.e., the collections of formulas from which we draw
  consequences, as ordinary sets of formulas,
  cf.~\cite{CiHaNo1,CiHaNo2} covering recent results.
  In the Pavelka approach, we consider $\L$-sets of
  formulas prescribing degrees to which formulas are satisfied in models, i.e.,
  not just degrees $0$ and $1$ as in the mainstream approach. In our case,
  for each $t,t' \in T(X)$, an $\L$-set $\Sigma\!: \Fml \to L$ prescribes
  a degree $\Sigma(t, t')$ which can be interpreted as a lower bound of
  a degree to which $t \ineq t'$ shall be satisfied in a model. Clearly,
  the standard understanding of theories as sets of formulas can be viewed
  as a particular case of the concept of theories as $\L$-sets of formulas
  since $\Sigma(t, t') = 1$ prescribes that $t \ineq t'$ shall be satisfied
  (fully) in a model of $\Sigma$ and $\Sigma(t, t') = 0$ means that
  in a model of $\Sigma$ the inequality $t \ineq t'$ need not be satisfied
  at all. On the other hand, one can achieve the same goal by considering
  theories as sets of formulas and introducing formulas of the form
  $\overline{a} \Rightarrow t \ineq t'$, where $\overline{a}$ is (a constant for)
  a truth degree $a \in L$ (interpreted by the truth degree itself),
  and $\Rightarrow$ is (the symbol for) implication which is interpreted
  by $\rightarrow$ in $\L$. This approach is used by H\'ajek in his
  Rational Pavelka Logic~\cite{Haj:Flah} which extends the \plL ukasiewicz
  logic by constants for rational truth degrees in the unit interval
  and bookkeeping axioms, see also~\cite{EsGoNo:Epltc}.
  In our paper, we keep the original Pavelka approach.
\end{remark}

\subsection{Proofs and Provability Degrees}\label{ssec:synent}
We characterize the degrees of semantic entailment of ineqaulites
introduced in~\eqref{eqn:sement} by suitably defined degrees of provability.
In this subsection, we introduce a deductive system for our logic and the
next section shows its completeness in Pavelka style. We use a notation
which is close to that in \cite[Section 9.2]{Haj:MFL}.

Let us recall that deduction rules in Pavelka style can be seen
as inference rules of the form
\begin{align}
  \cfrac{\langle \phi_1,a_1\rangle,\ldots,\langle \phi_n,a_n\rangle}{
    \langle \psi,b\rangle}\,,
  \label{eqn:dedrule}
\end{align}
where $\phi_1,\ldots,\phi_n,\psi$ are formulas and $a_1,\ldots,a_n,b$
are degrees in $\L$. The rule~\eqref{eqn:dedrule} reads:
``from $\phi_1$ valid to degree $a_1$ and $\cdots$ and $\phi_n$
valid to degree $a_n$, infer $\psi$ valid to degree $b$''. Hence,
unlike the ordinary deduction rules which only have the syntactic
component which in our case says that $\psi$ is derived
from $\phi_1,\ldots,\phi_n$, the rule \eqref{eqn:dedrule} has
an additional semantic component which computes the degree $b$ based
on the degrees $a_1,\ldots,a_n$.

Formally, an $n$-ary \emph{deduction rule} is a pair
$R = \langle R_1,R_2\rangle$ where $R_1$, called the
\emph{syntactic part of $R$},
is a~partial map from $\Fml^n$ to $\Fml$ and $R_2$, called the
\emph{semantic part of $R$} is a map $R_2\!: L^n \to L$.
A rule $R = \langle R_1,R_2\rangle$ such that
$R_1(\phi_1,\ldots,\phi_n) = \psi$ and $R_2(a_1,\ldots,a_n) = b$
is usually depicted as in~\eqref{eqn:dedrule}.
The semantic part $R_2$ of an $n$-ary deduction rule
$R=\langle R_1,R_2\rangle$ \emph{preserves non-empty suprema} if
\begin{align}
  \textstyle R_2(\ldots,\bigvee_{\!i \in I}a_i,\dots) &=
  \textstyle \bigvee_{\!i \in I}R_2(\ldots,a_i,\dots)
\end{align}
for each $I \ne \emptyset$ and $a_i \in L$ $(i \in I)$.
A \emph{deductive system} for $\mathit{Fml}$ and $\mathbf{L}$
is a pair $\langle A, \mathcal{R}\rangle$, where
\begin{enumerate}
\item[\itm{1}]
  $A\!: \Fml \to L$ is an $\L$-set of \emph{axioms}, and
\item[\itm{2}]
  $\mathcal{R}$ is a set of deduction rules,
  each preserving non-empty suprema.
\end{enumerate}
In our logic, we use a concrete deductive system $\langle A, \mathcal{R}\rangle$
where the $\L$-set $A$ of axioms is defined by
\begin{align}
  A(t, t') &=
  \left\{
    \begin{array}{@{\,}l@{\quad}l@{}}
      1, &\text{if } t = t', \\
      0, &\text{otherwise,}
    \end{array}
  \right.
  \label{eqn:Ax}
\end{align}
and $\mathcal{R}$ consists of the following deduction rules:
\begin{align}
  \mathrm{Tra}\!:\, &\cfrac{\langle t \ineq t', a\rangle,
    \langle t' \ineq t'', b\rangle}{
    \langle t \ineq t'', a \otimes b\rangle},
  \label{eqn:Tra} \\
  \mathrm{Com}\!:\, &\cfrac{\langle t_1 \ineq t'_1, a_1\rangle,\ldots,
    \langle t_n \ineq t'_n, a_n\rangle}{
    \langle f(t_1,\ldots,t_n) \ineq f(t'_1,\ldots,t'_n),
    a_1 \otimes \cdots \otimes a_n\rangle},
  \label{eqn:Com} \\
  \mathrm{Inv}\!:\, &\cfrac{\langle t \ineq t', a\rangle}{
    \langle h(t) \ineq h(t'), a\rangle}, 
  \label{eqn:Inv}
\end{align}
where $t,t',t'',t_1,t'_1,\ldots,t_n,t'_n \in T_F(X)$, 
$f$ is an $n$-ary function symbol in $F$,
$h$ is a homomorphism $h\!: \TX \to \TX$, and
$a,b,a_1,\ldots,a_n \in L$. The rules are called the rules
of \emph{transitivity}, \emph{compatibility}, and \emph{invariance,}
respectively.

\begin{remark}
  (a)
  Note that the rules of compatibility and invariance
  in~\eqref{eqn:Com} and \eqref{eqn:Inv} represent in fact multiple rules. 
  Indeed, for each function symbol $f \in F$, \eqref{eqn:Com} defines
  a separate deduction rule with the same number of input formulas as
  the arity of $f$. In the second case, for each $h$, \eqref{eqn:Inv}
  defines a separate deduction rule in sense of Pavelka. Note that all
  the rules have natural meaning. For instance, \eqref{eqn:Tra} reads:
  ``from $t \ineq t'$ valid to degree $a$ and $t' \ineq t''$ valid
  to degree $b$, infer $t \ineq t''$ valid (at least) to degree $a \otimes b$''.
  The compatibility rule can be interpreted analogously. The rule of invariance
  represents a particular substitution rule when from $t \ineq t'$ valid to
  degree $a$ one infers inequality $h(t) \ineq h(t')$ valid at least
  to degree $a$. Observe that $h(t)$ represents the result of a simultaneous
  substitution of each variable $x$ in term $t$ by term $h(x)$.

  (b)
  All the rules \eqref{eqn:Tra}--\eqref{eqn:Inv} preserve non-empty suprema
  since as a consequence of the adjointness property of $\L$,
  $\otimes$ is distributive with respect to general suprema $\bigvee$ in $\L$,
  see~\cite[Theorem 1.22]{BeVy:FEL}.
\end{remark}

Using our deduction system, we introduce provability degrees. Recall that
in the abstract Pavelka approach, we define proofs consisting of formulas
annotated by degrees in $\L$ as follows. Let $\langle A,\mathcal{R}\rangle$
be a deductive system for $\Fml$ and $\L$ and
let $\phi \in \Fml$ and $a \in L$.
A \emph{proof} (\emph{annotated by degrees in $\L$})
of $\langle \phi,a\rangle$ by $\Sigma$ using $\langle A,\mathcal{R}\rangle$
is a sequence
\begin{align*}
  \langle \phi_1,a_1\rangle,\dots,\langle \phi_n,a_n\rangle
\end{align*}
such that $\phi_n$ is $\phi$, $a_n=a$,
and for each $i=1,\dots,n$, we have
\begin{itemize}
\item[\itm{1}]
  $a_i=\Sigma(\phi_i)$, or
\item[\itm{2}]
  $a_i=A(\phi_i)$, or
\item[\itm{3}]
  there are $\langle \phi_{j_1},a_{j_1}\rangle,\ldots,
  \langle \phi_{j_k},a_{j_k}\rangle$ such that $j_1,\ldots,j_k < i$
  and there is $\langle R_1,R_2\rangle \in \mathcal{R}$ such that 
  $\phi_i = R_1(\phi_{j_1},\ldots,\phi_{j_k})$ and
  $a_i = R_2(a_{j_1},\ldots,a_{j_k})$.
\end{itemize}
If there is a proof of $\langle \phi,a\rangle$ by $\Sigma$ using
$\langle A,\mathcal{R}\rangle$,
we write $\Sigma \vdash^{\langle A,{\cal R}\rangle} \langle \phi,a\rangle$
and call $\langle \phi,a\rangle$ \emph{provable} by $\Sigma$
using $\langle A,\mathcal{R}\rangle$.
If $\Sigma \vdash^{\langle A,{\cal R}\rangle} \langle\phi,a\rangle$,
we also call $\phi$ \emph{provable
  by $\Sigma$ using $\langle A,\mathcal{R}\rangle$
  at least to degree $a$.}

Finaly, the \emph{degree of provability} of $\phi$ by $\Sigma$
using $\langle A,\mathcal{R}\rangle$, which is denoted by
$\prv[\Sigma]{\phi}^{\langle A,\mathcal{R}\rangle}$, is defined as follows:
\begin{align}
  \prv[\Sigma]{\phi}^{\langle A,\mathcal{R}\rangle} =
  \textstyle\bigvee\bigl\{a \in L;\,
  \Sigma\vdash^{\langle A,\mathcal{R}\rangle}\langle \phi,a\rangle\bigr\}.
  \label{eqn:prv}
\end{align}
That is, $\prv[\Sigma]{\phi}^{\langle A,\mathcal{R}\rangle}$ is the supremum
of all degrees to which $\phi$ is provable by $\Sigma$. If we use our deductive
system which consists of $A$ defined by~\eqref{eqn:Ax} and
\eqref{eqn:Tra}--\eqref{eqn:Inv} as the deduction rules,
we omit the superscript $\langle A,\mathcal{R}\rangle$ and
write just $\prv[\Sigma]{\phi}$ and $\Sigma \vdash \langle \phi,a\rangle$.

\begin{remark}
  The rules of compatibility and invariance may be substituted by
  alternative deduction rules which generalize the classic rules of
  replacement and substitution often considered in universal algebra
  and inequational logic and which also appear in~\cite{Be:Fel,BeVy:FEL}.
  Namely, the rule of \emph{replacement} is
  \begin{align*}
    \mathrm{Rep}\!:\, &\cfrac{\langle t \ineq t', a\rangle}{
      \langle s \ineq s', a\rangle},
  \end{align*}
  where $s$ is a term containing $t$ as a subterm and $s'$ results by
  $s$ by replacing one occurrence of $t$ by $t'$. A moment's reflection
  shows that if \eqref{eqn:Com} derives
  $f(t_1,\ldots,t_n) \ineq f(t'_1,\ldots,t_n)$ valid to degree
  $a_1 \otimes \cdots \otimes a_n$ then the same result can be achieved
  by $n$ applications of the replacement rule which derives
  $f(t_1,\ldots,t_n) \ineq f(t'_1,t_2,\ldots,t_n)$ to degree $a_1$,
  $f(t'_1,t_2,\ldots,t_n) \ineq f(t'_1,t'_2,t_3,\ldots,t_n)$
  to degree $a_2$, and $\cdots$ and,
  $f(t'_1,\ldots,t'_{n-1},t_n) \ineq f(t'_1,\ldots,t'_n)$
  valid to degree $a_n$ followed by $n$ applications of~\eqref{eqn:Tra}.
  Conversely, by induction over the rank of $s$ one can show
  that utilizing the axioms~\eqref{eqn:Ax}, one can produce the result of
  the replacement rule by applying~\eqref{eqn:Com}.

  The rule of \emph{substitution} is 
  \begin{align*}
    \mathrm{Sub}\!:\, &\cfrac{\langle t \ineq t', a\rangle}{
      \langle t(x/s) \ineq t'(x/s), a\rangle},
  \end{align*}
  where $t(x/s)$ and $t'(x/s)$ denote terms which result by $t$ and $t'$
  by substituting the term $s$ for each occurrence of the variable $x$ in
  $t$ and $t'$, respectively. Clearly, the rule of substitution is a particular
  case~\eqref{eqn:Inv}. On the other hand, if $X$ is denumerable,
  one may obtain the general result of~\eqref{eqn:Inv} by a series of
  applications of the rule of substitution. Let us note that in order to
  correctly implement the simultaneous substitution of~\eqref{eqn:Inv},
  one has to first substitute all variables in $t$ and $t'$ by
  variables which do not appear in either of $t$, $t'$, and $s$
  and thus the assumption on $X$ being (at least) denumerable is essential.
\end{remark}

Let us note here that the degrees of semantic entailment and the provability
degrees indroduced in this section generalize the classic concepts of
semantic entailment and provability in the following sense:
If $\Sigma$ is a crisp $\L$-set, i.e., if $\Sigma(t,t') \in \{0,1\}$
for all $t,t' \in T_F(X)$, then $\Sigma$ may be seen as an ordinary subset
of $\Fml$. In addition, $\val[\Sigma]{t \ineq t'} \in \{0,1\}$
and $\val[\Sigma]{t \ineq t'} = 1$ if{}f $t \ineq t'$ follows by $\Sigma$
in the usual sense (i.e., if{}f $t \ineq t'$ is true in each ordered
algebra which is a model of $\Sigma$, see the proof
of \cite[Theorem 12]{Vy:Awfo} for details).
Analogously,
$\prv[\Sigma]{t \ineq t'} \in \{0,1\}$ and $\prv[\Sigma]{t \ineq t'} = 1$
if{}f $t \ineq t'$ is provable by $\Sigma$ in the usual sense
(i.e., if{}f $t \ineq t'$ is provable by $\Sigma$ using the inference
system of the classic inequational logic). This situation occurs in
particular if $\L$ is the two-element Boolean algebra. Therefore,
the graded concepts of semantic and syntactic entailment in the
Pavelka approach is what makes our logic non-trivial.

\section{Completeness of Fuzzy Inequational Logic}\label{sec:compl}
In this section, we show that our logic is Pavelka-style complete over any $\L$.
It means that the degrees of semantic entailment agree with the degrees of
provability. Thus, for each $\Sigma$ and $t \ineq t'$, we establish
$\prv[\Sigma]{t \ineq t'} = \val[\Sigma]{t \ineq t'}$. Note that the
$\leq$-part of the claim (Pavelka-style soundness) is more or less evident.
We establish the equality by proving that the semantic and syntactic
closures associated to any $\L$-set of formulas coincide.
Some properties of the closures
and their relationship to the degrees of semantic entailment and provability
follow directly from properties of the abstract Pavelka framework.

We say that $\Sigma\!: \Fml \to L$ is \emph{semantically closed}
whenever
\begin{align}
  \val[\Sigma]{t \ineq t'} \leq \Sigma(t, t')
  \label{eqn:semclos}
\end{align}
for all $t,t' \in T_F(X)$.
Since the converse inequality always holds, $\Sigma$ is semantically closed if{}f
$\val[\Sigma]{t \ineq t'} = \Sigma(t, t')$ for all $t,t' \in T_F(X)$. Note that
using~\eqref{eqn:sement}, $\Sigma$ is semantically closed if{}f
$\Sigma = \textstyle\bigcap\Mod$.

As an immediate consequence, we get that the set of all semantically
closed $\L$-sets of formulas forms a closure system. In order to see that,
observe that if $\textstyle\bigcap\Mod[\Sigma_i] \subseteq \Sigma_i$ ($i \in I$)
then $\textstyle\bigcap_{i \in I}\bigcap\Mod[\Sigma_i] \subseteq
\bigcap_{i \in I}\Sigma_i$. Furthermore,
$\bigcap_{i \in I}\Sigma_i \subseteq \Sigma_i$ yields
$\bigcap\Mod[\bigcap_{i \in I}\Sigma_i] \subseteq \bigcap\Mod[\Sigma_i]$
for all $i \in I$ and thus
\begin{align*}
  \textstyle\bigcap\Mod[\textstyle\bigcap_{i \in I}\Sigma_i] \subseteq
  \textstyle\bigcap_{i \in I}\textstyle\bigcap\Mod[\Sigma_i] \subseteq 
  \textstyle\bigcap_{i \in I}\Sigma_i,
\end{align*}
showing that $\bigcap_{i \in I}\Sigma_i$ is semantically closed.
We may therefore consider the \emph{semantic closure $\semcl$} of $\Sigma$,
i.e., $\semcl$ is the least semantically closed set of formulas
containing $\Sigma$:
\begin{align}
  \semcl &= \textstyle\bigcap\{\Sigma';\,
  \Sigma \subseteq \Sigma' \text{ and } \textstyle\bigcap\Mod[\Sigma'] \subseteq \Sigma'\}.
  \label{eqn:semcl}
\end{align}
The semantic closure $\semcl$ of $\Sigma$ determines the degrees of semantic entailment.
Indeed, $\Sigma \subseteq \semcl$ yields
$\bigcap\Mod \subseteq \bigcap\Mod[\semcl] \subseteq \semcl$. Moreover,
$\bigcap\Mod$ is semantically closed owing to
the fact that $\Mod \subseteq \Mod[\bigcap\Mod]$ which is
easy to see since $\qc_{\st} \in \Mod$ implies
$\bigcap\Mod \subseteq \qc_{\st}$ and so $\qc_{\st} \in \Mod[\bigcap\Mod]$.
Altogether, we get that
\begin{align}
  \semcl &= \textstyle\bigcap\Mod
  \label{eqn:semcl_bigcap_Mod}
\end{align}
which using~\eqref{eqn:sement} means that
\begin{align}
  \semcl(t,t') &= \val[\Sigma]{t \ineq t'}
  \label{eqn:semcl=sement}
\end{align}
for all $t,t' \in T_F(X)$. Since $\semcl = \semcl[(\semcl)]$, we further derive
\begin{align}
  \val[\Sigma]{t \ineq t'} &= \semcl(t,t') = \val[\semcl]{t \ineq t'}
  \label{eqn:semcl_idemp}
\end{align}

Recall that
$\Sigma\!: \Fml \to L$ is called
\emph{syntactically closed} under $\langle A,\mathcal{R}\rangle$
if $A \subseteq \Sigma$ and for any $n$-ary deduction rule
$\langle R_1,R_2\rangle$ in $\mathcal{R}$ and arbitrary
formulas $\phi_1,\dots,\phi_n$, we have
\begin{align}
  R_2(\Sigma(\phi_1),\dots,\Sigma(\phi_n)) \leq
  \Sigma(R_1(\phi_1,\dots,\phi_n))
  \label{eqn:syncl_general}
\end{align}
provided that $R_1(\phi_1,\dots,\phi_n)$ is defined. In case of the deduction
system of our logic which consists of $A$ defined by~\eqref{eqn:Ax}
and deduction rules~\eqref{eqn:Tra}--\eqref{eqn:Inv},
the previous condition of $\Sigma$ being syntactically closed
translates into
\begin{align}
  \Sigma(t,t) &= 1,
  \label{eqn:dedRef} \\
  \Sigma(t,t') \otimes \Sigma(t',t'') &\leq \Sigma(t,t''),
  \label{eqn:dedTra} \\
  \Sigma(t_1,t'_1) \otimes \cdots \otimes \Sigma(t_n,t'_n)
  &\leq \Sigma(f(t_1,\ldots,t_n),f(t'_1,\ldots,t'_n)),
  \label{eqn:dedCom} \\
  \Sigma(t,t') &\leq \Sigma(h(t),h(t')),
  \label{eqn:dedInv}
\end{align}
which all must be satisfied for all
$t,t',t'',t_1,t'_1,\ldots,t_n,t'_n \in T_F(X)$, 
any $n$-ary function symbol $f \in F$, and
any homomorphism $h\!: \TX \to \TX$.

It can be shown that the set of all syntactically closed $\L$-sets of
formulas forms a closure system, see~\cite{Pav:Ofl1}
and \cite[Lemma 9.2.5]{Haj:MFL}. The \emph{syntactic closure $\syncl$}
of $\Sigma$ is thus introduced by
\begin{align}
  \syncl &= \textstyle\bigcap\{\Sigma';\,
  \Sigma \subseteq \Sigma' \text{ and }
  \Sigma' \text{ is syntactically closed}\}.
  \label{eqn:syncl}
\end{align}
As a consequence of the fact that the syntactic parts of deduction rules
in deductive systems preserve non-empty suprema, it follows that
\begin{align}
  \syncl(t,t') &= \prv[\Sigma]{t \ineq t'}
  \label{eqn:syncl=provab}
\end{align}
for all $t,t' \in T_F(X)$,
see \cite{Pav:Ofl1} and \cite[Theorem 9.2.8]{Haj:MFL}.

We now turn our attention to the completeness of our logic. By the previous
observations on the relationship between the syntactic/semantic entailments
and syntactic/semantic closures of $\L$-sets of formulas, in order to prove
that our logic is Pavelka-style complete, it suffices to show the equality
of syntactic and semantic closures for any $\Sigma$. The proof is elaborated
by the following two lemmas.

\begin{lemma}\label{le:syncl_semcl}
  For any $\Sigma\!: \Fml \to L$, we have $\syncl \subseteq \semcl$.
\end{lemma}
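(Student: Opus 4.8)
We want to show $\syncl \subseteq \semcl$, i.e., that the syntactic closure is contained in the semantic closure. Given the characterizations established just above, $\semcl = \bigcap \Mod$ is semantically closed, and $\syncl$ is the \emph{least} syntactically closed $\L$-set containing $\Sigma$. The most efficient route is therefore to prove that $\semcl$ is itself syntactically closed; since $\Sigma \subseteq \semcl$ (the semantic closure contains $\Sigma$), the minimality of $\syncl$ among syntactically closed $\L$-sets extending $\Sigma$ then yields $\syncl \subseteq \semcl$ immediately.

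\emph{First}, I would unwind what it means for $\semcl = \bigcap\Mod$ to be syntactically closed: by the translation given in \eqref{eqn:dedRef}--\eqref{eqn:dedInv}, I must verify that $\bigcap\Mod$ satisfies reflexivity, $\otimes$-transitivity, compatibility with each $f \in F$, and invariance under every homomorphism $h\colon \TX \to \TX$. \emph{The key observation} is that each model in $\Mod$ is of the form $\qc_{\st}$, which by its construction \eqref{eqn:qc_st_bigcap} is an intersection of the $\L$-preorders $\qc_{\home{v}}$ and hence is itself a \emph{compatible $\L$-preorder} on $\TX$. The conditions \eqref{eqn:dedRef}, \eqref{eqn:dedTra}, \eqref{eqn:dedCom} are then precisely the defining conditions \eqref{eqn:qcRef} (via $\ineq^{\TX}$), \eqref{eqn:qcTra}, \eqref{eqn:qcCom} of a compatible $\L$-preorder, read off on the term algebra $\TX$. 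Because these are all of the form ``an inequality between finite $\otimes$-products of entries holds pointwise,'' they are preserved under arbitrary intersections $\bigcap\Mod$; I would spell this out once, noting that $\bigwedge$ commutes appropriately with the pointwise $\otimes$-inequalities (monotonicity of $\otimes$ suffices).

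\emph{The one condition requiring genuine work} is invariance \eqref{eqn:dedInv}, and I expect this to be the main obstacle: I must show that if $\qc_{\st} \in \Mod$ then $\qc_{\st}(t,t') \le \qc_{\st}(h(t),h(t'))$ for every homomorphism $h\colon \TX \to \TX$. The plan is to exploit the fact that $h$ precomposes valuations: for any $\st$-valuation $v$, the composite $\home{v} \circ h$ is again the homomorphic extension of an $\st$-valuation (namely of $x \mapsto \home{v}(h(x))$), so that $\qc_{\home{v}}(h(t),h(t')) = \home{v}(h(t)) \ineq^{\st} \home{v}(h(t'))$ equals $\qc_{\home{v \circ \ldots}}(t,t')$ for that induced valuation. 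Taking infima over all valuations, the set of composites $\{\home{v}\circ h\}$ is a subfamily of all homomorphic extensions, so $\qc_{\st}(t,t') = \bigwedge_v \qc_{\home{v}}(t,t') \le \bigwedge_v \qc_{\home{v}}(h(t),h(t')) = \qc_{\st}(h(t),h(t'))$, which is exactly \eqref{eqn:dedInv}. I would phrase this carefully using \eqref{eqn:valMv_qc} and \eqref{eqn:qc_st_bigcap}, since the direction of the inequality hinges on the induced valuations forming a \emph{subset} rather than all of the valuations. Once all four closure conditions are verified for $\bigcap\Mod = \semcl$, minimality of $\syncl$ delivers the containment.
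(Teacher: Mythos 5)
Your proposal is correct and follows essentially the same route as the paper's proof: show that $\semcl$ contains $\Sigma$ and is syntactically closed, handling \eqref{eqn:dedRef}--\eqref{eqn:dedCom} via the compatible-$\L$-preorder structure of the models (preserved under intersection by monotonicity of $\otimes$), and handling invariance \eqref{eqn:dedInv} by the valuation-composition trick $w(x) = \home{v}(h(x))$ with the infimum-over-a-subfamily inequality, then invoking minimality of $\syncl$. The only cosmetic difference is that you verify invariance for each individual $\qc_{\st}$ before intersecting, whereas the paper runs the same computation directly on $\semcl$.
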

\begin{proof}
  It suffices to check that $\semcl$ contains $\Sigma$ and is syntactically
  closed because $\syncl$ is the least syntactically closed $\L$-set in $\Fml$
  containing $\Sigma$.

  Obviously, $\Sigma \subseteq \semcl$ and thus it suffices to check that
  $\semcl$ satisfies all \eqref{eqn:dedRef}--\eqref{eqn:dedInv}. Trivially,
  $\semcl$ satisfies \eqref{eqn:dedRef} because $\qc_{\st}(t,t) = 1$
  for any $\qc_{\st} \in \Mod$. In order to see that \eqref{eqn:dedTra}
  is satisfied, take $t,t',t'' \in T_F(X)$ and observe that
  \eqref{eqn:qcTra},
  \eqref{eqn:qc_st_bigcap}, and
  \eqref{eqn:semcl_bigcap_Mod} together with the fact that
  $a \otimes \bigwedge_{i \in I}b_i \leq \bigwedge_{i \in I}(a \otimes b_i)$
  yield
  \begin{align*}
    \semcl(t,t') \otimes \semcl(t',t'') &=
    \bigl(\textstyle\bigcap\Mod\bigr)(t,t') \otimes
    \bigl(\textstyle\bigcap\Mod\bigr)(t',t'') \\
    &\leq
    \textstyle\bigwedge_{\qc_{\st} \in \Mod}
    \bigl(\qc_{\st}(t,t') \otimes \qc_{\st}(t',t'')\bigr) \\
    &\leq
    \textstyle\bigwedge_{\qc_{\st} \in \Mod}
    \textstyle\bigwedge_{v: X \to M}
    \bigl(\qc_{\home{v}}(t,t') \otimes \qc_{\home{v}}(t',t'')\bigr) \\
    &\leq
    \textstyle\bigwedge_{\qc_{\st} \in \Mod}
    \textstyle\bigwedge_{v: X \to M}
    \qc_{\home{v}}(t,t'') \\
    &= \semcl(t,t'').
  \end{align*}
  Analogously, one may check~\eqref{eqn:dedCom} utilizing \eqref{eqn:qcCom}.
  Finally, \eqref{eqn:dedInv} is satisfied because for every
  homomorphism $h\!: \TX \to \TX$ and $\st$-valuation $v\!: X \to M$ one can take
  an $\st$-valuation $w\!: X \to M$ satisfying $w(x) = \home{v}(h(x))$
  for all $x \in X$. For $w$, by induction over the rank of terms,
  we get that $\home{w}(t) = \home{v}(h(t))$ for all $t \in T_F(X)$.
  Therefore,
  \begin{align*}
    \semcl(t,t') &=
    \textstyle\bigwedge_{\qc_{\st} \in \Mod}
    \textstyle\bigwedge_{w: X \to M}
    \qc_{\home{w}}(t,t') \\
    &\leq \textstyle\bigwedge_{\qc_{\st} \in \Mod}
    \textstyle\bigwedge_{v: X \to M}
    \qc_{\home{v}}(h(t),h(t')) \\
    &= \semcl(h(t),h(t')).
  \end{align*}
  Therefore, $\semcl$ is syntactically closed.
\end{proof}

Note that using~\eqref{eqn:semcl=sement}, \eqref{eqn:syncl=provab},
and Lemma~\ref{le:syncl_semcl}, we get that our logic is sound:
\begin{align}
  \prv{t \ineq t'} =
  \syncl(t,t') \leq
  \semcl(t,t') =
  \val[\Sigma]{t \ineq t'}.
\end{align}
The next lemma proves the converse inequality.

\begin{lemma}\label{le:semcl_syncl}
  For any $\Sigma\!: \Fml \to L$, we have $\semcl \subseteq \syncl$.
\end{lemma}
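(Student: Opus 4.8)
The plan is to show that the syntactically closed $\L$-set $\syncl$ is itself semantically closed; since $\Sigma \subseteq \syncl$ and $\semcl$ is by definition the least semantically closed $\L$-set containing $\Sigma$, this immediately yields $\semcl \subseteq \syncl$. Concretely, I will exhibit a single model $\qc_{\st}$ of $\Sigma$ whose associated degrees already dominate $\syncl$, so that $\semcl = \bigcap\Mod \subseteq \qc_{\st}$ forces $\semcl(t,t') \leq \syncl(t,t')$ for all $t,t'$. The model will be built canonically out of $\syncl$ in the style of a term-model construction.

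First I would observe that, because $\syncl$ is syntactically closed, it satisfies (\ref{eqn:dedRef})--(\ref{eqn:dedInv}). Reading $\syncl$ as a binary $\L$-relation on $T_F(X)$, conditions (\ref{eqn:dedRef}), (\ref{eqn:dedTra}), and (\ref{eqn:dedCom}) are exactly (\ref{eqn:qcRef}), (\ref{eqn:qcTra}), and (\ref{eqn:qcCom}) for the term algebra $\TX$ (note that $\ineq^{\TX}$ is the identity, so (\ref{eqn:qcRef}) reduces to $\syncl(t,t)=1$). Hence $\syncl$ is a compatible $\L$-preorder on $\TX$, and I may form the factor algebra with $\L$-order $\st = \TX/\syncl$, which is well defined by \cite[Lemma 4]{Vy:Awfo}. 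Its universe is $M = T_F(X)/\syncl$ and, by the third defining clause of the factor, $\cl[\syncl]{t} \ineq^{\st} \cl[\syncl]{t'} = \syncl(t,t')$.

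Next I would pin down the canonical valuation $v\colon X \to M$ given by $v(x) = \cl[\syncl]{x}$. Since the natural homomorphism $h_{\syncl}$ of (\ref{eqn:nat}) restricts to $v$ on variables and both are homomorphisms into $\st$, uniqueness of the homomorphic extension gives $\home{v} = h_{\syncl}$, so $\home{v}(t) = \cl[\syncl]{t}$ for every term and therefore $\qc_{\home{v}} = \syncl$; hence (\ref{eqn:valMv_qc}) yields $\val[\st,v]{t \ineq t'} = \syncl(t,t')$. To see that $\qc_{\st}$ is a model of $\Sigma$, I would take an arbitrary valuation $w\colon X \to M$, choose for each $x$ a representative $s_x$ with $w(x) = \cl[\syncl]{s_x}$, and let $g\colon \TX \to \TX$ be the homomorphism determined by $g(x) = s_x$ (a term substitution is indeed a homomorphism of $\L$-ordered term algebras because $\ineq^{\TX}$ is the identity). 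The same uniqueness argument yields $\home{w}(t) = \cl[\syncl]{g(t)}$, hence $\val[\st,w]{t \ineq t'} = \qc_{\home{w}}(t,t') = \syncl(g(t),g(t'))$. Now the invariance condition (\ref{eqn:dedInv}) applied to $g$ together with $\Sigma \subseteq \syncl$ gives $\Sigma(t,t') \leq \syncl(t,t') \leq \syncl(g(t),g(t')) = \val[\st,w]{t \ineq t'}$; taking the infimum over all $w$ and using (\ref{eqn:valM}) shows $\Sigma(t,t') \leq \val[\st]{t \ineq t'}$, i.e.\ $\qc_{\st} \in \Mod$ by (\ref{eqn:Mod}).

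Finally, since $\qc_{\st} \in \Mod$ we have $\semcl = \bigcap\Mod \subseteq \qc_{\st}$ by (\ref{eqn:semcl_bigcap_Mod}), so using (\ref{eqn:valst_qcst}) and the fact that an infimum is bounded by its value at the canonical valuation, $\semcl(t,t') \leq \qc_{\st}(t,t') = \val[\st]{t \ineq t'} \leq \val[\st,v]{t \ineq t'} = \syncl(t,t')$, which is the desired inclusion. I expect the main obstacle to be the verification that $\st$ is a genuine model, i.e.\ that \emph{every} valuation $w$ (not just the canonical one) respects $\Sigma$: this is precisely the point where the invariance rule (\ref{eqn:dedInv}) is indispensable, since an arbitrary $w$ amounts to a substitution $g$, and without closure under substitution some valuation could drive $\val[\st,w]{t \ineq t'}$ below $\Sigma(t,t')$. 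The canonical valuation then plays the complementary role of realizing the exact degree $\syncl(t,t')$.
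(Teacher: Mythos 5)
Your proposal is correct and takes essentially the same route as the paper's proof: both construct the factor algebra $\TX/\syncl$ from the compatible $\L$-preorder $\syncl$, use the canonical valuation $v(x)=\cl[\syncl]{x}$ to recover $\syncl$ exactly, and invoke the invariance rule \eqref{eqn:dedInv} to turn an arbitrary valuation into a substitution. The only divergence is bookkeeping---the paper proves the exact equality $\syncl=\qc_{\TX/\syncl}$ and phrases the conclusion as semantic closedness of $\syncl$, whereas you verify just the two inequalities needed to bound $\semcl$ directly through the single model $\qc_{\TX/\syncl}$---but the mathematical content is identical.
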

\begin{proof}
  It suffices to check that $\syncl$ contains $\Sigma$ and is semantically
  closed because $\semcl$ is the least semantically closed $\L$-set in $\Fml$
  containing $\Sigma$.

  Observe that since $\syncl$ satisfies~\eqref{eqn:dedRef}--\eqref{eqn:dedCom},
  it is a compatible $\L$-preorder on $\TX$ and by definition it
  contains $\Sigma$. Therefore, we may consider the factor
  algebra $\TX/\syncl$ with $\L$-order. For the
  factor algebra we now prove that $\syncl = \qc_{\TX/\syncl}$ by checking both
  inclusions.

  Take a $\TX/\syncl$-valuation $v\!: X \to T_F(X)/\syncl$ such that
  $v(x) = \cl[\syncl]{x}$. For its homomorphic extension
  $\home{v}\!: \TX \to \TX/\syncl$, we have $\home{v}(t) = \cl[\syncl]{t}$
  for all $t \in T_F(X)$. As a consequence
  \begin{align*}
    \qc_{\TX/\syncl}(t,t') \leq
    \qc_{\home{v}}(t,t') =
    \cl[\syncl]{t} \ineq^{\TX/\syncl} \cl[\syncl]{t'} =
    \syncl(t,t'),
  \end{align*}
  which proves that $\qc_{\TX/\syncl} \subseteq \syncl$. Conversely, take
  $v\!: X \to T_F(X)/\syncl$ and let $h\!: X \to T_F(X)$ be a map such that
  $h(x) \in v(x)$ for all $x \in X$. For the homomorphic extension $\home{h}$
  of $h$, we get $\home{v}(t) = \cl[\syncl]{\home{h}(t)}$
  for all $t \in T_F(X)$. As a consequence of~\eqref{eqn:dedInv},
  it follows that
  \begin{align*}
    \syncl(t,t') \leq
    \syncl(\home{h}(t),\home{h}(t')) =
    \cl[\syncl]{\home{h}(t)} \ineq^{\TX/\syncl} \cl[\syncl]{\home{h}(t')} =
    \qc_{\home{v}}(t,t'),
  \end{align*}
  showing $\syncl \subseteq \qc_{\home{v}}$. Since $v$ is arbitrary,
  we get $\syncl \subseteq \qc_{\TX/\syncl}$.

  We now finish the proof as follows.
  Using the inclusion $\syncl \subseteq \qc_{\TX/\syncl}$,
  we get $\qc_{\TX/\syncl} \in \Mod[\syncl]$ and thus
  $\textstyle\bigcap \Mod[\syncl] \subseteq \qc_{\TX/\syncl} \subseteq \syncl$
  on account of $\qc_{\TX/\syncl} \subseteq \syncl$. This proves that 
  $\syncl$ is semantically closed.
\end{proof}

To sum up, we have established the following completeness theorem:

\begin{theorem}[completeness]\label{th:compl}
  For any $\Sigma\!: \Fml \to L$ and $t,t' \in T_F(X)$, we have
  \begin{align}
    \prv{t \ineq t'} = \val[\Sigma]{t \ineq t'}.
  \end{align}
\end{theorem}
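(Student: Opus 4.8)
The plan is to deduce the theorem purely from the machinery assembled before its statement, reducing the graded completeness claim to the set-theoretic equality $\semcl = \syncl$ of the semantic and syntactic closures. The abstract Pavelka framework has already furnished two representation identities that translate the two closures into the two degrees of interest: equation~\eqref{eqn:semcl=sement} gives $\semcl(t,t') = \val[\Sigma]{t \ineq t'}$, so that the semantic closure records the degrees of semantic entailment, while equation~\eqref{eqn:syncl=provab} gives $\syncl(t,t') = \prv[\Sigma]{t \ineq t'}$, so that the syntactic closure records the degrees of provability. Hence, once the two closures are shown to coincide as $\L$-sets in $\Fml$, the asserted numerical equality follows termwise.

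I would obtain that coincidence simply by chaining the two preceding lemmas, which already establish the opposing inclusions. Lemma~\ref{le:syncl_semcl} supplies $\syncl \subseteq \semcl$ (the soundness half), and Lemma~\ref{le:semcl_syncl} supplies the converse $\semcl \subseteq \syncl$; together they force $\syncl = \semcl$. Substituting this equality between the two representation identities then yields, for all $t,t' \in T_F(X)$,
\begin{align*}
  \prv{t \ineq t'} = \syncl(t,t') = \semcl(t,t') = \val[\Sigma]{t \ineq t'},
\end{align*}
which is exactly the statement to be proved.

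Since the theorem itself is thus a one-line synthesis, the genuine difficulty lives entirely inside the two lemmas, and I expect the main obstacle to be the completeness direction $\semcl \subseteq \syncl$ of Lemma~\ref{le:semcl_syncl}. The soundness inclusion is comparatively routine, amounting to verifying that $\semcl$ satisfies the closure conditions~\eqref{eqn:dedRef}--\eqref{eqn:dedInv} using that each $\qc_{\st} \in \Mod$ is a compatible $\L$-preorder. The hard half requires manufacturing, out of the purely syntactic object $\syncl$, an actual model that witnesses its semantic closedness; the natural device is the factor term algebra $\TX/\syncl$, which is well defined precisely because~\eqref{eqn:dedRef}--\eqref{eqn:dedCom} make $\syncl$ a compatible $\L$-preorder on $\TX$. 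The crux is then to identify the induced preorder $\qc_{\TX/\syncl}$ with $\syncl$ itself via the canonical valuation $x \mapsto \cl[\syncl]{x}$ (for one inclusion) and an arbitrary choice of representatives together with the invariance condition~\eqref{eqn:dedInv} (for the other), after which $\qc_{\TX/\syncl} \in \Mod[\syncl]$ yields $\bigcap\Mod[\syncl] \subseteq \syncl$ and hence semantic closedness.
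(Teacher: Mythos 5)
Your proposal is correct and takes essentially the same route as the paper: the paper's proof of Theorem~\ref{th:compl} is exactly the one-line synthesis of \eqref{eqn:semcl=sement}, \eqref{eqn:syncl=provab}, Lemma~\ref{le:syncl_semcl}, and Lemma~\ref{le:semcl_syncl} that you describe. Your accompanying sketch of where the real work lies (the factor algebra $\TX/\syncl$ argument for $\semcl \subseteq \syncl$) also matches the paper's treatment of the lemmas.
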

\begin{proof}
  Consequence of~\eqref{eqn:semcl=sement}, \eqref{eqn:syncl=provab},
  Lemma~\ref{le:syncl_semcl}, and Lemma~\ref{le:semcl_syncl}.
\end{proof}

We conclude the section by remarks on the completeness.

\begin{remark}
  (a)
  Our inequational logic can be seen as a particular fragment of a first-order
  fuzzy logic which only uses atomic formulas and a single relation symbol---the
  symbol for inequality. For this particular fragment, we have established
  Pavelka-style completeness over arbitrary $\L$. This is in contrast to 
  the full first-order logic (with all connectives in the language including
  the implication) where Pavelka-style completeness depends
  on the continuity of the truth functions
  of logical connectives, cf.~\cite{Haj:MFL,Pav:Ofl1,Pav:Ofl2,Pav:Ofl3}.

  (b)
  As a consequence of Theorem~\ref{th:compl}, we get that $\semcl$
  (which is equal to $\syncl$) is a compatible $\L$-preorder on $\TX$
  which in addition satisfies~\eqref{eqn:dedInv}, i.e.,
  it is a \emph{fully invariant} compatible $\L$-preorder on $\TX$
  and the factor algebra $\TX/\semcl$ with $\L$-order fully describes
  the degrees of entailment by $\Sigma$ because
  \begin{align*}
    \prv{t \ineq t'} =
    \val[\Sigma]{t \ineq t'} =
    \val[\TX/\semcl]{t \ineq t'}.
  \end{align*}
  This generalizes the well-known property of syntactically/semantically
  closed sets of inequalities in case of the classic inequational logic.

  (c)
  Also note that the notion of provability degree is not finitary in
  the usual sense: $\prv{t \ineq t'} = a$ does not guarantee that
  $\Sigma \vdash \langle t \ineq t', a\rangle$. The arguments are the
  same as in the case of the fuzzy equational logic~\cite{Be:Fel},
  cf.~\cite[Example 3.32]{BeVy:FEL}.
\end{remark}

\section{Application: Abstract Logic of Graded Attributes}\label{sec:fal}
We now show how the general result in the previous section can be used to
obtain complete axiomatizations of logics dealing with particular problem
domains. For illustration, we show a general logic for reasoning with graded
if-then rules which generalize the ordinary attribute implications which
appear in formal concept analysis~\cite{GaWi:FCA} of
relational object-attribute data. In this section, we first recall
the notions related to attribute implications and their entailment and
then we present their generalization which exploits the results from
Section~\ref{sec:compl}.

Consider a finite set $Y$ of symbols called \emph{attributes.} An attribute
implication over $Y$ is an expression
\begin{align}
  A \Rightarrow B
  \label{eqn:ai}
\end{align} 
such that $A,B \subseteq Y$. The intended meaning of $A \Rightarrow B$ is to
express a dependency ``if an object has all the attributes in $A$,
then it has all the attributes in $B$'' and if $A = \{p_1,\ldots,p_m\}$
and $B = \{q_1,\ldots,q_n\}$, the attribute implication \eqref{eqn:ai} is
written as
\begin{align}
  \{p_1,\ldots,p_m\} \Rightarrow \{q_1,\ldots,q_n\}.
  \label{eqn:ai_pq}
\end{align}
For $A,B,M \subseteq Y$, we call $A \Rightarrow B$ \emph{satisfied by $M$}
(or \emph{true in $M$}) whenever $A \subseteq M$ implies $B \subseteq M$
(i.e., $A \nsubseteq M$ or $B \subseteq M$) and denote the fact
by $M \models A \Rightarrow B$. Note that if $M$ is considered as
a set of attributes of an object, then 
$M \models A \Rightarrow B$ means that 
``If the object has all the attributes in $A$,
then it has all the attributes in $B$'' which corresponds with the
intended meaning outlined above.

\begin{remark}
  Let us note that formulas like~\eqref{eqn:ai_pq} appear in other
  disciplines and are extensively used for knowledge representation and
  reasoning about data dependencies. For instance, they are known under
  the name \emph{functional dependencies} in relational
  databases~\cite{Mai:TRD} and can be seen as particular
  \emph{definite clauses} used in logic programming~\cite{Lloyd84}.
  Interestingly, even if the database semantics of the rules differs
  from the one introuced above, it yields the same notion of
  semantic entailment~\cite{Fa:Fdrbpl} and thus a common axiomatization.
  Rules like~\eqref{eqn:ai_pq} are also used in data mining
  as \emph{association rules} \cite{AgImSw:ASR,Zak:Mnrar},
  with their validity in data being defined using constraints such as
  confidence and support.
\end{remark}

Semantic entailment of attribute implications is introduced as follows.
A set $M \subseteq Y$ is called a \emph{model} of a set $\Sigma$
of attribute implications whenever $M \models A \Rightarrow B$
for all $A \Rightarrow B \in \Sigma$. Furthermore, $A \Rightarrow B$ is
\emph{semantically entailed} by $\Sigma$, written
$\Sigma \models A \Rightarrow B$, if $M \models A \Rightarrow B$
for each model $M$ of $\Sigma$.

The semantic entailment of attribute implications has an axiomatization
which is based on the following deduction rules
\begin{align}
  \mathrm{Ax}\!:\,
  &\cfrac{}{A {\cup} B \Rightarrow A},
  &
  \mathrm{Tra}\!:\,
  &\cfrac{A \Rightarrow B, B \Rightarrow C}{A \Rightarrow C},
  &
  \mathrm{Aug}\!:\,
  &\cfrac{A \Rightarrow B}{A{\cup}C \Rightarrow B{\cup}C},
  \label{eqn:orig_Arms}
\end{align}
where $\cup$ denotes the set-theoretic union and $A,B,C \subseteq Y$.
Note that $\mathrm{Ax}$ is a nullary rule, i.e., each $A{\cup}B \Rightarrow A$
is an axiom. Using the deduction rules, we define the usual notion of
provability of attribute implications from sets of attribute implications:
for $\Sigma$ and $A \Rightarrow B$, we put $\Sigma \vdash A \Rightarrow B$
whenever there is a sequence (a proof) $\phi_1,\ldots,\phi_n$ such that
$\phi_n$ is $A \Rightarrow B$ and each $\phi_i$ in the sequence is in $\Sigma$
or results by preceding formulas in the sequence using $\mathrm{Ax}$,
$\mathrm{Tra}$, or $\mathrm{Aug}$. The usual completeness theorem
is established: $\Sigma \models A \Rightarrow B$ if{}f
$\Sigma \vdash A \Rightarrow B$.

The axiomatization based on $\mathrm{Ax}$, $\mathrm{Tra}$, and $\mathrm{Aug}$
was discovered by Armstrong~\cite{Arm:Dsdbr}. There are other equivalent
systems of deductions rules which are even simpler. For instance, 
$\mathrm{Tra}$ (transitivity), and $\mathrm{Aug}$ (augmentation)
can be equivalently replaced by the rule of \emph{cut}
(also known as \emph{pseudo-transitivity}~\cite{Mai:TRD}):
\begin{align}
  \mathrm{Cut}\!:\,
  &\cfrac{A \Rightarrow B, B{\cup}C \Rightarrow D}{A{\cup}C \Rightarrow D}
  \label{eqn:Cut_ai}
\end{align}
for all $A,B,C,D \subseteq Y$.

In this section, we propose a general form of formulas like~\eqref{eqn:ai_pq}
with general semantics and a complete Pavelka-style axiomatization.
In particular, we focus on a generalization where \emph{attributes are graded.}
That is, instead of considering the presence/absence of attributes as in the 
classic setting, we allow attributes to be \emph{present to degrees} and we 
allow graded entailment of rules from $\L$-sets of other rules, following
Pavelka's approach. The presented extension is motivated by the fact that
in many situations, a data analyst may want to express validity of rules
to degrees and may want to be able to make an \emph{approximate inference}
based on partially true rules.

\begin{remark}
  There are approaches which generalize attribute implications in
  a graded setting.
  Most notably, the early approach by Polandt~\cite{Po:FB} which introduces
  attribute implications as formulas in the formal concept analysis
  of graded object-attribute data and the more general approach by
  Belohlavek and Vychodil~\cite{BeVy:ADfDwG} which parameterizes the
  semantics of the rules by linguistic hedges~\cite{BeVy:Fcalh,EsGoNo:Hedges,Za:Afstilh}.
  The approaches are different from the generalization presented below.
  Namely, \cite{BeVy:ADfDwG} uses rules which may be seen as implications
  between graded $\L$-sets of attributes, i.e., the (constants for) truth
  degrees appear explicitly in the antecedents and consequents of the rules.
  In contrast, the generalization in this section does not use (constants for)
  truth degrees in formulas but, on the other hand, it offers a more general
  interpretation of the rules, e.g., $\Rightarrow$ may have other
  interpretations than the residuum in $\L$.
\end{remark}

We start by considering formulas of our general logic of attribute implications.
Although it is widely used, the set-theoretic treatment of attribute
implications like~\eqref{eqn:ai_pq} is somewhat limiting. For instance, it
implies that the (interpretation of) conjunction which is tacitly used in
the definition of $M \models A \Rightarrow B$ is idempotent. Of course, this
is true in the classic setting but it may not be desirable in
a graded generalization. Therefore, we view \eqref{eqn:ai_pq} as
a (propositional) formula of the form
\begin{align}
  \bigl(p_1 \logand \cdots \logand p_m \logand \top)
  \Rightarrow
  \bigl(q_1 \logand \cdots \logand q_n \logand \top\bigr),
  \label{eqn:ai_prop}
\end{align}
where $\Rightarrow$ is a symbol for material implication,
$\logand$ is a symbol for conjunction, and $\top$ is
the truth constant denoting $1$ (the truth value ``true'').
Observe that $\top$ is needed to correctly handle the case of
$m = 0$ or $n = 0$.
Thus, in the narrow sense, an attribute implication can be seen as
a (propositional) formula in the form of an implication between conjunctions
of attributes in $Y$ (which are considered as propositional variables).
Since the classic $\logand$ is commutative, associative, and idempotent,
the order of variables, additional parentheses, or duplicities of variables
may be neglected.

Formula \eqref{eqn:ai_prop} is true under a given evaluation $e$
of propositional variables in sense of the classical propositional logic,
if the value of the antecedent (under the evaluation $e$)
is \emph{less than or equal to} the value of the
consequent (under the evaluation $e$). Therefore, \eqref{eqn:ai_prop}
being true may be expressed via the \emph{ordering of truth degrees.}
The main idea of our approach is to utilize general $\L$-orders to
evaluate such formulas instead of the standard order of the truth
values $0$ and~$1$.

In our setting, we formalize attribute implications as atomic formulas
in a language of algebras with $\L$-order: Consider
a set $Y = \{f_1,\ldots,f_n\}$ of attributes. Each attribute $f_i$ will
be considered as a nullary function symbol, i.e., as a symbol of an object
constant. In addition to that, we consider a binary function symbol
$\cdot$ (called a \emph{composition} which may be viewed as a symbol for
a fuzzy conjunction) and a nullary function symbol $\top$
(called an \emph{identity}). Therefore,
\begin{align}
  F = \{\cdotsym, f_1,\ldots,f_n, \top\}.
  \label{eqn:F}
\end{align}
Any inequality written in the language given by $F$ and $X = \emptyset$
is called a (\emph{general}) \emph{attribute implication.}

\begin{example}
  The role of the composition $\cdot$ is to express antecedents and
  consequent of attribute implications consisting of more than one attribute.
  For instance, \eqref{eqn:ai_pq} can be seen as inequality
  $p_1 \cdot (p_2 \cdot (\cdots p_m) \cdots) \ineq
  q_1 \cdot (q_2 \cdot (\cdots q_n) \cdots)$.
  In addition, $\top$ may be seen as the counterpart of the empty antecedents
  and consequents, e.g., $p \ineq \top$ and $\top \ineq q$ may represent
  $\{p\} \Rightarrow \emptyset$ and $\emptyset \Rightarrow \{q\}$.
\end{example}

In each algebra with $\L$-order which is considered a reasonable interpretation
of the generalized attribute implications, $\cdot$ and $\top$ shall satisfy some
basic properties. It is reasonable to assume that $\top$ is neutral with respect
to $\cdot$, $\top$ is the greatest element, $\cdot$ is associative
(to make parentheses in terms irrelevant) and commutative
(to make the order of $f_1,\ldots,f_n$ in terms irrelevant).
We therefore postulate the following laws:
\begin{align}
  t \cdot \top &\ineq t, \label{eqn:neu1} \\
  t &\ineq t \cdot \top, \label{eqn:neu2} \\
  t &\ineq \top, \label{eqn:int} \\
  r \cdot (s \cdot t) &\ineq (r \cdot s) \cdot t, \\
  (r \cdot s) \cdot t &\ineq r \cdot (s \cdot t), \\
  t \cdot s &\ineq s \cdot t, \label{eqn:com}
\end{align}
where $r,s,t \in T_F(\emptyset)$. Two remarks are in order: First, \eqref{eqn:int}
\emph{does not} ensure that an algebra $\st$ with $\L$-equality satisfying
\eqref{eqn:int} to degree $1$ has $\top^{\st}$ as the greatest element.
On the other hand, for each $f_i$, we have
$f^{\st}_i \ineq^{\st} \top^{\st} = 1$. Second, \eqref{eqn:com}
may be considered superfluous. It is the opinion of the author that $\cdot$
should be commutative but the logic can be developed in a more general setting
without \eqref{eqn:com} in much the same way as it is presented below.

\begin{definition}\label{def:sem}
  An algebra with $\L$-order of type \eqref{eqn:F} which satisfies
  inequalities \eqref{eqn:neu1}--\eqref{eqn:com}
  for $r,s,t \in T_F(\emptyset)$ to degree $1$ is called
  an \emph{$\L$-structure for general attribute implications}
  over attributes $Y = \{f_1,\ldots,f_n\}$.
\end{definition}

\begin{remark}
  Since we always consider the generalized attribute implications to be
  evaluated in $\L$-structures which are algebras with $\L$-orders satisfying 
  \eqref{eqn:neu1}--\eqref{eqn:com}, we may accept the usual rules
  of simplifying the inequalities. Namely, we disregard parentheses
  and the order of symbols in terms, and we may omit $\top$
  if it is a part of a compound term. In addition, we may omit the symbol
  of composition and write just $ts$ instead of $t \cdot s$.
  Therefore, \eqref{eqn:ai_pq} may be written as
  $p_1p_2\cdots p_m \ineq q_1q_2\cdots q_n$. Note that $\cdot$ is not
  idempotent and thus $p \ineq p$ and $p \ineq pp$ represent different
  general attribute implications.
\end{remark}

\begin{example}
  Let us show that particular $\L$-structure for general attribute implications
  can be derived directly from $\L$. Indeed, for a complete residuated lattice
  $\L = \langle L,\wedge,\vee,\otimes,\rightarrow,0,1\rangle$, we may consider
  a structure 
  \begin{align*}
    \st =
    \langle M,\ineq^{\st},\cdot^{\st},
    f^{\st}_1,\ldots,f^{\st}_n,\top^{\st}\rangle,
  \end{align*}
  where $M = L$,
  $f_i^{\st} \in L$ for each $i=1,\ldots,n$, $\top^{\st} = 1$, and
  \begin{align*}
    a \ineq^{\st} b &= a \rightarrow b, &
    a \cdot^{\st} b &= a \otimes b, 
  \end{align*}
  for all $a,b \in M$. It is easy to check that $\st$ is an algebra
  with $\L$-order and it satisfies each inequality
  \eqref{eqn:neu1}--\eqref{eqn:com} to degree $1$. First,
  $\st$ is indeed an algebra with $\L$-order:
  \eqref{eqn:iRef} is satisfied because
  $a \rightarrow b = b \rightarrow a = 1$ is true if{}f $a \leq b$
  and $b \leq a$ and thus if{}f $a = b$;
  \eqref{eqn:iTra} is satisfied because $(a \rightarrow b) \otimes
  (b \rightarrow c) \leq a \rightarrow c$ follows by the adjointness property;
  \eqref{eqn:iComF} is satisfied for $\cdot$ because
  $(a \rightarrow b) \otimes (c \rightarrow d) \leq
  (a \otimes c) \rightarrow (b \otimes d)$ holds in $\L$;
  the case of \eqref{eqn:iComF} and the nullary operations is trivial
  since $1 \leq f_i \rightarrow f_i$ and $1 \leq 1 \rightarrow 1$.
  In addition, each \eqref{eqn:neu1}--\eqref{eqn:com}
  is obviously satisfied to degree $1$ since $\langle L,\otimes,1\rangle$
  is a commutative monoid with $1$ being the greatest element in $L$.
  Thus, $\st$ represents an $\L$-structure for general attribute implications
  where $\cdot^{\st}$ is not idempotent in general.
  Note that an $\L$-structure for general attribute implications
  with idempotent $\cdot^{\st}$ may
  be obtained by putting $a \cdot^{\st} b = a \wedge b$ for all $a,b \in L$
  and leaving the rest as in the previous case. Again, using 
  $(a \rightarrow b) \otimes (c \rightarrow d) \leq
  (a \wedge c) \rightarrow (b \wedge d)$, it follows that
  the structure is indeed an $\L$-structure for general attribute implications.
\end{example}

The framework of the inequational logic gives us the notions of semantic
entailment and provability of general attribute implications:

\begin{definition}\label{def:ai_entail}
  Let $\Sigma$ by an $\L$-set of general attribute implications and let
  \begin{align}
    \Sigma^\mathrm{AI}(t, t') &=
    \left\{
      \begin{array}{@{\,}l@{\quad}l@{}}
        1, &\text{if } t \ineq t' \text{ is in the form of
          some formula in \eqref{eqn:neu1}--\eqref{eqn:com}}, \\
        0, &\text{otherwise.}
      \end{array}
    \right.
    \label{eqn:SigmaAI}
  \end{align}
  The degree $\val[\Sigma]{t \ineq t'}^{\mathrm{AI}}$ to which
  a general attribute implication $t \ineq t'$
  is \emph{semantically entailed} by $\Sigma$ is defined by
  \begin{align}
    \val[\Sigma]{t \ineq t'}^{\mathrm{AI}} &=
    \val[\Sigma \cup \Sigma^\mathrm{AI}]{t \ineq t'}
    \label{eqn:AI_sem}
  \end{align}
  and the degree $\prv[\Sigma]{t \ineq t'}^{\mathrm{AI}}$ to which
  $t \ineq t'$ is \emph{provable} by $\Sigma$ is defined by
  \begin{align}
    \prv[\Sigma]{t \ineq t'}^{\mathrm{AI}} &=
    \prv[\Sigma \cup \Sigma^\mathrm{AI}]{t \ineq t'},
    \label{eqn:AI_syn}
  \end{align}
  where $\Sigma \cup \Sigma^\mathrm{AI}$ denotes the union of
  $\L$-sets $\Sigma$ and $\Sigma^\mathrm{AI}$.
\end{definition}

Applying Theorem~\ref{th:compl}, we obtain the following completeness
of the logic of general attribute implications.

\begin{theorem}\label{th:compl_ais}
  Let $\Sigma$ by an $\L$-set of general attribute implications.
  Then, for any general attribute implication $t \ineq t'$, we have
  $\prv{t \ineq t'}^\mathrm{AI} = \val[\Sigma]{t \ineq t'}^\mathrm{AI}$.
\end{theorem}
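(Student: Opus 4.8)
The plan is to observe that Theorem~\ref{th:compl_ais} is nothing more than a specialization of the general completeness Theorem~\ref{th:compl}, and that all the genuine work has already been carried out there. By Definition~\ref{def:ai_entail}, the two degrees occurring in the statement are defined, via \eqref{eqn:AI_syn} and \eqref{eqn:AI_sem}, as the ordinary provability and semantic-entailment degrees of the fuzzy inequational logic computed from the enlarged $\L$-set $\Sigma \cup \Sigma^\mathrm{AI}$ rather than from $\Sigma$ alone. Since $\Sigma \cup \Sigma^\mathrm{AI}$ is again just an $\L$-set in $\Fml$, Theorem~\ref{th:compl} applies to it without any change.

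Concretely, I would simply chain the three equalities
\begin{align*}
  \prv{t \ineq t'}^\mathrm{AI}
  = \prv[\Sigma \cup \Sigma^\mathrm{AI}]{t \ineq t'}
  = \val[\Sigma \cup \Sigma^\mathrm{AI}]{t \ineq t'}
  = \val[\Sigma]{t \ineq t'}^\mathrm{AI},
\end{align*}
where the outer two equalities are exactly \eqref{eqn:AI_syn} and \eqref{eqn:AI_sem}, and the middle one is Theorem~\ref{th:compl} instantiated at the $\L$-set $\Sigma \cup \Sigma^\mathrm{AI}$. This is the whole argument.

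The only thing worth checking before invoking Theorem~\ref{th:compl} is that the framework of Section~\ref{sec:compl} actually applies in the present language, where the type $F$ is given by \eqref{eqn:F} and the set of object variables is empty. That framework is stated for an arbitrary type and variable set subject only to $T_F(X) \ne \emptyset$; here $X = \emptyset$ but $F$ contains the nullary symbols $f_1,\ldots,f_n,\top$, so $T_F(\emptyset) \ne \emptyset$ and the closures $\semcl$, $\syncl$ together with both lemmas, and hence Theorem~\ref{th:compl}, go through verbatim. It is also reassuring to note, for interpretive purposes, that prescribing $\Sigma^\mathrm{AI}$ to degree $1$ on the instances of \eqref{eqn:neu1}--\eqref{eqn:com} forces every model of $\Sigma \cup \Sigma^\mathrm{AI}$ to satisfy those inequalities to degree $1$, i.e.\ restricts the semantics precisely to the $\L$-structures for general attribute implications of Definition~\ref{def:sem}; this is what makes \eqref{eqn:AI_sem} the intended notion of entailment.

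There is essentially no obstacle here, which is itself the point: the definitions in Definition~\ref{def:ai_entail} were arranged so that the passage from the general inequational logic to attribute implications costs nothing. If one wished to expose a little more detail, the mildest technical remark is that with $X = \emptyset$ every homomorphism $h\colon \TX \to \TX$ fixes all ground terms, so the invariance condition \eqref{eqn:dedInv} (coming from rule \eqref{eqn:Inv}) is vacuous and the deductive system effectively reduces to reflexivity, transitivity, and compatibility; but this observation is not needed, since the proof treats Theorem~\ref{th:compl} as a black box.
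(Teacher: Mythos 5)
Your proposal is correct and takes essentially the same route as the paper: the paper's own proof is precisely the one-line observation that the claim follows from \eqref{eqn:AI_sem}, \eqref{eqn:AI_syn}, and Theorem~\ref{th:compl} applied to the $\L$-set $\Sigma \cup \Sigma^\mathrm{AI}$. Your supplementary checks (that $T_F(\emptyset) \ne \emptyset$ because $F$ contains the nullary symbols $f_1,\ldots,f_n,\top$, and that the invariance rule is vacuous when $X = \emptyset$) are sound and consistent with remarks made elsewhere in the paper, though the paper does not include them in the proof itself.
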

\begin{proof}
  Consequence of \eqref{eqn:AI_sem}, \eqref{eqn:AI_syn},
  and Theorem~\ref{th:compl}.
\end{proof}

Let us conclude this section by remarks on the consequence of
Theorem~\ref{th:compl_ais} and properties of the proposed logic of general
attribute implications.

\begin{remark}
  Owing to the general notion of $\L$-structure
  for general attribute implications, the fact that
  $\val[\Sigma]{t \ineq t'}^\mathrm{AI} \geq a$ should be understood so that
  $t \ineq t'$ is true at least to degree $a$ under any possible
  interpretation of the composition $\cdot$ and the ordering $\ineq$ which
  makes all formulas true at least to the degrees prescribed by $\Sigma$. This
  is in contrast with the other approaches such as~\cite{BeVy:ADfDwG} where
  the analogues of the composition and ordering are given directly by the
  structure of degrees. In our setting, the structure of degrees just
  puts a constraint on the mutual relationship of $\cdot$ and $\ineq$.
  Namely, since $\st$ is supposed to be an algebra
  with $\L$-order, $\cdot^{\st}$ is compatible with $\ineq^{\st}$, i.e.,
  the condition~\eqref{eqn:iComF} with $\cdot$ in place of $f$. The condition
  is quite natural and generalizes the monotony property: If $t$ is less than
  or equal to $t'$ (under some evaluation) and
  $s$ is less than or equal to $s'$ (under the same evaluation),
  then $ts$ (i.e., the conjunction of $t$ and $s$) is less than or equal
  to $t's'$ (i.e., the conjunction of $t'$ and $s'$).
\end{remark}

\begin{remark}
  It is interesting to observe how the inference system simplifies
  in case of $F$ given by~\eqref{eqn:F} and $X = \emptyset$.
  First, $X = \emptyset$ means that~\eqref{eqn:Inv} is superfluous
  because it infers $\langle t \ineq t',a\rangle$ from
  $\langle t \ineq t',a\rangle$. In addition, in case of $f_1,\ldots,f_n$
  or $\top$, \eqref{eqn:Com} becomes a nullary rule which
  infers $\langle f_i \ineq f_i, 1\rangle$ or
  $\langle \top \ineq \top, 1\rangle$ from no input formulas.
  Since both are axioms to degree $1$, see~\eqref{eqn:Ax}, it makes sense
  to consider~\eqref{eqn:Com} only for the composition. That is, our
  deductive system for general attribute implications reduces to
  \begin{align}
    \mathrm{Tra}\!:\, &\cfrac{\langle t \ineq t', a\rangle,
      \langle t' \ineq t'', b\rangle}{
      \langle t \ineq t'', a \otimes b\rangle},
    &
    \mathrm{Com}\!:\, &\cfrac{
      \langle t \ineq t', a\rangle,
      \langle s \ineq s', b\rangle}{
      \langle ts \ineq t's',
      a \otimes b\rangle},
  \end{align}
  for all $t,t',t'',s,s' \in T_F(\emptyset)$ and $a,b \in L$.
  Observe that by a particular case of $\mathrm{Com}$ for
  $s = s'$ and $b = 1$, we get a derived deduction rule
  \begin{align}
    \mathrm{Aug}\!:\, &\cfrac{
      \langle t \ineq t', a\rangle}{
      \langle ts \ineq t's,
      a\rangle},
  \end{align}
  where $t,t',s \in T_F(\emptyset)$ and $a \in L$. Conversely, $\mathrm{Tra}$
  and $\mathrm{Aug}$ yield $\mathrm{Com}$. Indeed, applying $\mathrm{Aug}$
  twice, we get $\langle ts \ineq t's,a\rangle$ and
  $\langle st' \ineq s't',b\rangle$ from $\langle t \ineq t',a\rangle$ and
  $\langle s \ineq s',b\rangle$, respectively. Now, using the axiom of
  commutativity~\eqref{eqn:com} and $\mathrm{Tra}$,
  we infer $\langle t's \ineq t's',b\rangle$
  and thus $\langle ts \ineq t's', a \otimes b\rangle$ by $\mathrm{Tra}$.
  This shows that the deductive system can be reduced to $\mathrm{Tra}$
  and $\mathrm{Aug}$. This is an interesting observation because it means
  that the two deduction rules in our logic are in fact Pavelka-style extensions
  of the two main Armstrong deduction rules of transitivity and augmentation,
  see~\eqref{eqn:orig_Arms}. In addition, our system proves each
  $ts \ineq t$ to degree $1$ which generalizes the nullary
  Armstrong rule $\mathrm{Ax}$. Indeed, we infer
  $\langle st \ineq \top t,1\rangle$ from
  $\langle s \ineq \top,1\rangle$ by $\mathrm{Aug}$ and thus
  $\langle ts \ineq t,1\rangle$ is derivable by~\eqref{eqn:neu1}
  and \eqref{eqn:com} using $\mathrm{Tra}$. We can simplify the system
  even more by considering a single deduction rule which
  generalizes~\eqref{eqn:Cut_ai}. Namely, we may introduce
  \begin{align}
    \mathrm{Cut}\!:\, &\cfrac{
      \langle t \ineq t', a\rangle,
      \langle t's \ineq s', b\rangle}{
      \langle ts \ineq s',
      a \otimes b\rangle},
  \end{align}
  for all $t,t',s,s' \in T_F(\emptyset)$ and $a,b \in L$ with the possibility
  of $s$ being omitted. Clearly, $\mathrm{Tra}$ is then a particular case of
  $\mathrm{Cut}$ with $s$ omitted and $\mathrm{Aug}$ results by $\mathrm{Cut}$
  for $s' = t's$ and $b = 1$. Conversely, one can infer
  $\langle ts \ineq t's, a\rangle$ from $\langle t \ineq t', a\rangle$
  by $\mathrm{Aug}$ and then apply $\mathrm{Tra}$ with
  $\langle t's \ineq s', b\rangle$ to obtain the result of $\mathrm{Cut}$.
  Therefore, $\mathrm{Tra}$ and $\mathrm{Aug}$ can be replaced
  by $\mathrm{Cut}$. As a result, our logic has a Pavelka-style complete
  deductive system which results by attaching a non-trivial semantic part
  to the deduction rules of the ordinary Armstrong system in both the original
  version and the simplified version using $\mathrm{Cut}$.
\end{remark}

\subsubsection*{Conclusions}
We showed a Pavelka-style complete logic for reasoning with graded inequalities
using any complete residuated lattices as the structure of truth degrees. The
results generalize the previous results on completeness of fuzzy equational
logic by considering more general semantics given by algebras with fuzzy orders
and omitting the deduction rule of symmetry. In addition, we showed an
application of the general completeness result showing a way to generalize
the ordinary attribute implications in a graded setting with a general semantics
and Pavelka-style complete inference system which generalizes the well-known
Armstrong system of inference rules.

\subsubsection*{Acknowledgment}
Supported by grant no. \verb|P202/14-11585S| of the Czech Science Foundation.


\footnotesize
\bibliographystyle{amsplain}
\bibliography{fineql}

\end{document}